\newcommand{\BibTeX}{\textsc{B\kern-0.1emi\kern-0.017emb}\kern-0.15em\TeX}
\begin{document}

\title{On the Properties of MVR Chain Graphs}
\author{\Name{Mohammad Ali Javidian} \Email{javidian@email.sc.edu}\and
	\Name{Marco Valtorta} \Email{mgv@cse.sc.edu}\\
	\addr Department of Computer Science \& Engineering, University of South Carolina, Columbia, SC, 29201, USA.}

\maketitle

\begin{abstract}
Depending on the interpretation of the type of edges, a chain graph can represent different relations between variables and thereby independence models. Three interpretations, known by the acronyms LWF, MVR, and AMP, are prevalent. We review Markov properties for MVR chain graphs and propose an alternative local Markov property for them. Except for pairwise Markov properties, we show that for MVR chain graphs all Markov properties in the literature are equivalent for semi-graphoids. We derive a new factorization formula for MVR chain graphs which is more explicit than and different from the proposed factorizations for MVR chain graphs in the literature. Finally, we provide a summary table comparing different features of LWF, AMP, and MVR chain graphs.
\end{abstract}
\begin{keywords}
multivariate regression chain graph, Markov property, graphical Markov models, factorization of probability distributions, conditional independence, marginalization of causal latent variable models, compositional graphoids.
\end{keywords}

\section{Introduction}
A probabilistic graphical model is a probabilistic model for which a graph represents the conditional dependence structure between random variables. There are several classes of graphical
models; Bayesian networks (BN), Markov networks, chain graphs, and ancestral graphs are commonly used \citep{l, rs}. Chain graphs, which admit both directed and undirected edges,  are a type of graphs in which there are no partially directed cycles.  Chain graphs were introduced by Lauritzen, Wermuth and Frydenberg \citep{f, lw} as a generalization of graphs based on undirected graphs and directed acyclic graphs (DAGs). Later Andersson, Madigan 
and Perlman introduced an alternative Markov property for chain graphs \citep{amp}.
In 1993 \citep{cw1}, Cox and Wermuth introduced multivariate regression chain graphs (MVR CGs).

Acyclic directed mixed graphs (ADMGs), also known as semi-Markov(ian) \citep{pj} 
models contain directed ($\rightarrow$) and bidirected
($\leftrightarrow$) edges subject to the restriction that there are no directed cycles \citep{r2, er}.
An ADMG that has no partially directed cycle is called a \textit {multivariate regression chain graph}. In this paper we focus on the class of multivariate regression chain graphs and we discuss their Markov properties. 

It is worthwhile to mention that unlike in the other CG interpretations, bidirected edges in an MVR CG have
a strong intuitive meaning. It can be seen that a bidirected edge represents one or more hidden
common causes between the variables connected by it. In other words, in an MVR CG any bidirected
edge $X\leftrightarrow Y$ can be replaced by $X\gets H\to Y$ to obtain a Bayesian network representing
the same independence model over the original variables, i.e. excluding the
new variables H. These variables are called hidden, or latent, and have been
marginalized away in the CG model \citep{s}. This causal interpretation of bidirected edges in MVR CGs along with the discussion preceding Theorem \ref{th1} provides strong motivation for the importance of MVR CGs. 

In the first decade of the 21st century, several Markov property (global, pairwise, block recursive, and so on) were introduced by authors and researchers \citep{rs, wc, ml1, ml2, d}. Lauritzen, Wermuth, and Sadeghi \citep{sl, sw} proved that the global and (four) pairwise Markov properties of an MVR chain graph are equivalent for any independence model that is a compositional graphoid. The major contributions of this paper may be summarized as follows:

	\noindent $\bullet$ An alternative local Markov property for MVR chain graphs, which is equivalent to other Markov properties in the literature for compositional semi-graphoids.
	
	\noindent $\bullet$ A comparison of different proposed Markov properties for MVR chain graphs in the literature and conditions under which they are equivalent. 
	
	\noindent $\bullet$ An alternative explicit factorization criterion for MVR chain graphs based on the proposed factorization criterion for acyclic directed mixed graphs in \citep{er}.
    
\section{Definitions and Concepts}
\begin{definition}
	A vertex $\alpha$ is said to be an \emph{ancestor} of a vertex $\beta$ if either there is a directed path $\alpha \to \dots \to \beta$ from $\alpha$ to $\beta$, or $\alpha=\beta$. A vertex $\alpha$ is said to be \emph{anterior} to a vertex $\beta$ if there is a path $\mu$ from $\alpha$ to $\beta$ on which every edge is either of the form $\gamma-\delta$, or $\gamma \to\delta$ with $\delta$ between $\gamma$ and $\beta$, or $\alpha=\beta$; that is, there are no edges $\gamma\leftrightarrow\delta$ and there are no edges $\gamma\gets\delta$ pointing toward $\alpha$. Such a path is said to be an anterior path from $\alpha$ to $\beta$.   We apply these definitions disjunctively to sets: $an(X) = \{\alpha | \alpha \textrm{ is an ancestor of } \beta \textrm{ for some } \beta \in X\}$, and 	
	$ant(X) = \{\alpha | \alpha \textrm{ is an anterior of } \beta \textrm{ for some } \beta \in X\}$. If necessary we specify the graph by a subscript, as in $ant_G(X)$. 
	The usage of the terms “ancestor” and “anterior” differs from Lauritzen \citep{l}, but follows Frydenberg \citep{f}.
\end{definition}
\begin{definition}\label{ances}
	A mixed graph is a graph containing three types of edges, undirected ($-$), directed ($\to$) and bidirected ($\leftrightarrow$). An ancestral graph G is a mixed graph in which the following conditions hold for all vertices $\alpha$ in G:
	
		\noindent (i) if $\alpha$ and $\beta$ are joined by an edge with an arrowhead at $\alpha$, then $\alpha$ is not anterior to $\beta$.
		
		\noindent (ii) there are no arrowheads present at a vertex which is an endpoint of an undirected edge.
\end{definition}
\begin{definition}\label{RS}
	A nonendpoint vertex $\zeta$ on a path is a \emph{collider} on the path if the edges preceding and succeeding $\zeta$ on the path have an arrowhead at $\zeta$, that is, $\to \zeta \gets, or \leftrightarrow \zeta \leftrightarrow, or\leftrightarrow \zeta \gets, or\to \zeta \leftrightarrow$. A nonendpoint vertex $\zeta$ on a path which is not a collider is a noncollider on the path. A path between vertices $\alpha$ and $\beta$ in an ancestral graph G is said to be m-connecting given a set Z (possibly empty), with $\alpha, \beta \notin Z$, if: 
	
		\noindent (i) every noncollider on the path is not in Z, and 
		
		\noindent (ii) every collider on the path is in $ant_G(Z)$.
	
	If there is no path m-connecting $\alpha$ and $\beta$ given Z, then $\alpha$ and $\beta$ are said to be \emph{$m$-separated} given Z. Sets X and Y are m-separated given Z, if for every pair $\alpha, \beta$, with $\alpha\in X$ and $\beta \in Y$, $\alpha$ and $\beta$ are $m$-separated given $Z$ (X, Y, and Z are disjoint sets; X, Y are nonempty). This criterion is referred to as a \emph{global Markov property}. We denote the independence model resulting from applying the m-separation criterion to G, by $\Im_m$(G). This is an extension of Pearl's $d$-separation criterion to mixed graphs in that in a DAG D, a path is $d$-connecting if and only if it is m-connecting.
\end{definition}
\begin{definition}
	Let $G_A$ denote the induced subgraph of $G$ on the vertex set $A$, formed by removing from $G$ all vertices that are not in $A$, and all edges that do not have both endpoints in $A$. Two vertices $x$ and $y$ in an MVR chain graph $G$ are said to be collider connected if there is a path from $x$ to $y$ in $G$ on which every non-endpoint vertex is a collider; such a path is called a collider path. (Note that a single edge trivially forms a collider path, so if $x$ and $y$ are adjacent in an MVR chain graph then they are collider connected.) The \emph{augmented graph} derived from $G$, denoted $(G)^a$, is an undirected graph with the same vertex set as $G$ such that $c\--d \textrm{ in } (G)^a \Leftrightarrow c \textrm{ and } d \textrm{ are collider connected in } G.$
\end{definition}

\begin{definition}\label{RS2}
	Disjoint sets $X, Y\ne \emptyset,$ and $Z$ ($Z$ may be empty) are said to be
	\emph{$m^\ast$-separated} if $X$ and $Y$ are separated by Z in $(G_{ant(X\cup Y\cup Z)})^a$. Otherwise $X$ and $Y$ are said to be $m^\ast$-connected
	given $Z$. The resulting independence model is denoted by $\Im_{m^\ast}(G)$.
\end{definition}
Richardson and Spirtes in \citep[Theorem 3.18.]{rs} show that for an ancestral graph $G$, $\Im_m(G)=\Im_{m^\ast}(G)$. Note that in the case of ADMGs and MVR CGs, anterior sets in definitions \ref{RS}, \ref{RS2} can be replaced by ancestor sets, because in both cases anterior sets and ancestor sets are the same.

\begin{definition}
	An ancestral graph G is said to be maximal if for every pair of vertices $\alpha, \beta$ if $\alpha$ and $\beta$ are not adjacent in G then there is a set Z ($\alpha, \beta\notin Z$), such that $\langle\{\alpha\},\{\beta\} | Z\rangle  \in \Im_m (G)$. Thus a graph is maximal if every missing edge corresponds to at least one independence in the corresponding independence model.
\end{definition}

A simple example of a nonmaximal ancestral graph is shown in Figure \ref{Fig:6}: $\gamma$ and $\delta$ are not adjacent, but are $m$-connected given every subset of $\{\alpha,\beta\}$, hence $\Im_m(G) =\emptyset$.
	\begin{figure}[ht]
	\centering
	\includegraphics[scale=.75]{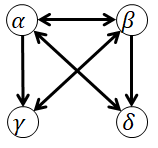}
	\caption{\citep{rs} A nonmaximal ancestral graph.} \label{Fig:6}
\end{figure}

If $G$ is an undirected graph or a directed acyclic graph, then $G$ is a maximal ancestral graph \citep[Proposition 3.19]{rs}.

The absence of partially directed cycles in MVR CGs implies that the vertex set of a chain graph
can be partitioned into so-called chain components such that edges within a chain component are
bidirected whereas the edges between two chain components are directed and point in the same
direction. So, any chain graph yields a directed acyclic graph $D$ of its chain components having $\mathcal{T}$ as a node set and an edge $T_1\to T_2$ whenever there exists in the chain graph $G$ at least one edge $u\rightarrow v$ connecting a node \textit{u} in $T_1$ with a node \textit{v} in $T_2$. In this directed graph, we may define for each $T$ the set $pa_D (T)$ as the union of all the chain components that are parents of $T$ in the directed graph $D$. This concept is distinct from the usual notion of the parents $pa_G(A)$ of a set of nodes $A$ in the chain graph, that is, the set of all the nodes $w$ outside $A$ such that $w\to v$ with $v\in A$ \citep{ml2}.

Given a chain graph G with chain components $(T|T\in \mathcal{T})$, we can always define a strict total order $\prec$ of the chain components that is consistent with the partial order induced by the chain graph, such that if $T \prec T'$ then $T\notin pa_D(T')$ (we draw $T'$ to the right of $T$ as in the example of Figure \ref{Fig:mvr}).

\begin{figure}[ht]
		\centering
		\includegraphics[scale=.5]{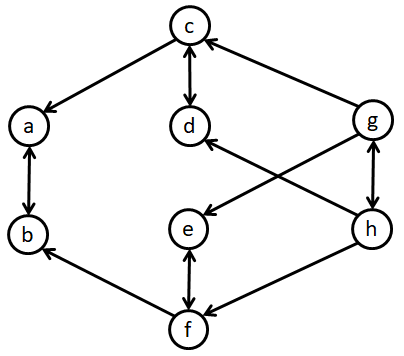}
		\caption{An MVR CG with chain components: $\mathcal{T}=\{T_1=\{a,b\} , T_2=\{c,d\}, T_3=\{e,f\}, T_4=\{g,h\}\}$.} \label{Fig:mvr}
	\end{figure}
    
For each $T$, the set of all components preceding $T$ is known and we may define the cumulative set $pre(T)=\cup_{T\prec T'}T'$  of nodes contained in the predecessors of component $T$, which we sometimes call the past of $T$. The set $pre(T)$ captures the notion of all the potential explanatory variables of the response variables within $T$ \citep{ml2}. In fact, MVR CGs can model the possible presence of residual
associations among the responses using a bidirected
graph, and this is
consistent with an interpretation of bidirected edges in terms of latent variables \citep{rov,e}.

\section{Markov Properties for MVR Chain Graphs}
In this section, first, we show,  formally, that MVR chain graphs are a subclass of the maximal ancestral graphs of Richardson and Spirtes \citep{rs} that include only observed and latent variables. Latent variables cause several complications. First, causal inference based on structural learning algorithms such as the PC algorithm \citep{sgs} may be incorrect. Second, if a distribution is faithful to a DAG, then the distribution obtained by marginalizing out on some of the variables may not be faithful to any DAG on the observed variables i.e., the space of DAGs is not closed under marginalization \citep{cmkr}.
\begin{example}\label{example1}
	Consider that the DAG $G$ in Figure \ref{Fig:ex1}(a) is a perfect map of the distribution of $(X, Y, U, V, H)$, and suppose that $H$ is latent. There is no DAG on $\{X, Y, U, V\}$ that encodes exactly the
	same d-separation relations among $\{X, Y, U, V\}$ as $G$. Hence, there does not exist a perfect map of the
	marginal distribution of $(X, Y, U, V, H)$. 
\end{example}
	\begin{figure}[ht]
	\centering
	\includegraphics[scale=.75]{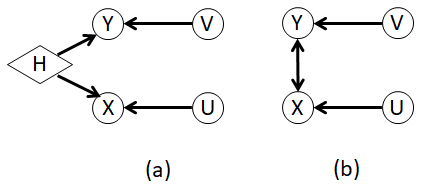}
	\caption{(a) A directed graph including a vertex $H$ for an unobserved variable, (b) the
		independence structure encoded by the MVR CG} \label{Fig:ex1}
\end{figure}
Mixed graphs provide a useful approach to address these problems without explicit modeling
of latent variables (e.g., \citep{rs,pj,ws}). The nodes of these
graphs index the observed variables only. The edges, however, may be of two types, directed and
bidirected. This added flexibility allows one to represent the more complicated dependence structures arising from a DAG with latent variables. A straightforward generalization of d-separation
determines conditional independencies in mixed graph models \citep{dm}. For instance, the MVR chain graph in Figure \ref{Fig:ex1} (b) is a perfect map for the distribution in Example \ref{example1}.
As a result, one possibility for solving the above mentioned problems is exploiting MVR chain graphs that cope with these problems without explicit modeling
of latent variables. This motivates the development of studies on MVR CGs, and \citep{dm} emphasize that methods that account for the effects of latent variables need to be developed further.   
\begin{theorem}\label{th1}
	If $G$ is an MVR chain graph, then $G$ is an ancestral graph.
\end{theorem}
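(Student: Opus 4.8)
The plan is to verify the two defining conditions of an ancestral graph (Definition \ref{ances}) directly, exploiting the fact that an MVR chain graph is an ADMG---so that it has no undirected edges and no directed cycles---together with the defining absence of partially directed cycles. Condition (ii), which requires that no arrowhead be present at an endpoint of an undirected edge, I would dispose of immediately: since an MVR chain graph, being an ADMG, contains only directed and bidirected edges, it has no undirected edges at all, so condition (ii) is vacuously satisfied.

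The substance lies in condition (i): whenever $\alpha$ and $\beta$ are joined by an edge with an arrowhead at $\alpha$, one must show that $\alpha$ is not anterior to $\beta$. Here I would first record the observation (noted in the excerpt) that in an MVR chain graph the anterior and ancestor sets coincide, so ``$\alpha$ anterior to $\beta$'' reduces to the existence of a directed path $\alpha \to \cdots \to \beta$. Since the joining edge forces $\alpha \neq \beta$, such a path would have length at least one. An edge with an arrowhead at $\alpha$ is of exactly one of two forms, $\beta \to \alpha$ or $\alpha \leftrightarrow \beta$, and I would argue by contradiction in each case, assuming a directed path $\alpha \to \cdots \to \beta$ exists.

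In the first case, $\beta \to \alpha$, concatenating the directed path $\alpha \to \cdots \to \beta$ with the edge $\beta \to \alpha$ produces a directed cycle, which is forbidden in an ADMG. In the second case, $\alpha \leftrightarrow \beta$, the same directed path closed off by the bidirected edge $\beta \leftrightarrow \alpha$ yields a cycle in which every edge is either directed consistently with the traversal or is bidirected---that is, a partially directed cycle---which is forbidden by the very definition of an MVR chain graph. Either way we reach a contradiction, so $\alpha$ is not an ancestor, hence not anterior, to $\beta$, and condition (i) holds.

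I expect the main obstacle to be precisely the bidirected subcase of condition (i): one must be careful to check that a directed path closed by a single bidirected edge genuinely satisfies the definition of a partially directed cycle (no edge being traversed against a directed arrowhead), and that it is the prohibition of \emph{partially} directed cycles, rather than acyclicity alone, that is needed here---acyclicity by itself does not exclude the configuration $\alpha \to \cdots \to \beta \leftrightarrow \alpha$. The directed subcase, by contrast, is a routine appeal to the absence of directed cycles in an ADMG.
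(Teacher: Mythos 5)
Your proof is correct, and it shares the paper's skeleton: condition (ii) is vacuous because an MVR chain graph has no undirected edges, and condition (i) is split according to whether the edge with an arrowhead at $\alpha$ is $\beta \to \alpha$ or $\alpha \leftrightarrow \beta$. Where you genuinely differ is the mechanism for ruling out anteriority. The paper argues through the chain-component structure: in the bidirected case it places $\alpha$ and $\beta$ in the same component and claims paths between them are all bidirected, hence never anterior paths; in the directed case it appeals to the consistent ordering of components (directed edges ``point from right to left''), so no path can run from $\alpha$ back to $\beta$. You instead reduce ``anterior'' to ``ancestor'' (legitimate---the paper itself records that anterior and ancestor sets coincide in ADMGs, simply because anterior paths contain no bidirected edges) and close the hypothetical directed path $\alpha \to \cdots \to \beta$ into a forbidden cycle: a directed cycle when the edge is $\beta \to \alpha$, excluded in any ADMG, and a partially directed cycle when the edge is $\alpha \leftrightarrow \beta$, excluded by the very definition of an MVR chain graph. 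Your route buys two things: it is self-contained in the definition of an MVR CG as an ADMG without partially directed cycles, needing neither the component decomposition nor its total ordering; and it repairs a slight informality in the paper's bidirected case, since the assertion that ``all edges on a path between two nodes of a chain component are bidirected'' is literally false for paths that leave the component---what actually closes that gap is precisely the cycle argument you make explicit. The paper's version, in exchange, keeps the components-ordered-left-to-right picture in view, which is the structural intuition the rest of the paper relies on. Your closing caution---that the bidirected subcase needs the prohibition of \emph{partially} directed cycles rather than acyclicity alone---is exactly the right point to flag.
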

\begin{proof}
	Obviously, every MVR chain graph is a mixed graph without undirected edges. So, it is enough to show that condition (i) in Definition \ref{ances} is satisfied. For this purpose, consider that $\alpha$ and $\beta$ are joined by an edge with an arrowhead at $\alpha$ in MVR chain graph G. Two cases are possible. First, if $\alpha\leftrightarrow\beta$ is an edge in G, by definition of an MVR chain graph, both of them belong to the same chain component. Since all edges on a path between two nodes of a chain component are bidirected, then by definition $\alpha$ cannot be an anterior of $\beta$. Second, if $\alpha\leftarrow\beta$ is an edge in G, by definition of an MVR chain graph, $\alpha$ and $\beta$ belong to two different components ($\beta$ is in a chain component that is to the right side of the chain component that contains $\alpha$). We know that all directed edges in an MVR chain graph are arrows pointing from right to left, so there is no path from $\alpha$ to $\beta$ in G i.e. $\alpha$ cannot be an anterior of $\beta$ in this case. We have shown that $\alpha$ cannot be an anterior of $\beta$ in both cases, and therefore condition (i) in Definition \ref{ances} is satisfied. In other words, every MVR chain graph is an ancestral graph. 
\end{proof}

The following result is often mentioned in the literature \citep{ws, p2, sl, s}, but we know of no published proof.
\begin{corollary}\label{cor1}
	Every MVR chain graph has the same independence model as a DAG under marginalization.
\end{corollary}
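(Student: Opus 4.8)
The plan is to build from the given MVR chain graph $G$ the canonical DAG $D$ anticipated in the introduction: on an augmented vertex set, keep every directed edge of $G$ and replace each bidirected edge $X\leftrightarrow Y$ by a fork $X\gets H_{XY}\to Y$ through a fresh latent vertex $H_{XY}$. Write $V$ for the original (observed) vertices and let $H$ collect the added latent ones. The goal is then to prove two things: that $D$ is acyclic, so it is a genuine DAG; and that for all disjoint $X,Y,Z\subseteq V$, the sets $X$ and $Y$ are $m$-separated given $Z$ in $G$ if and only if they are $d$-separated given $Z$ in $D$. The second statement says exactly that $\Im_m(G)$ equals the restriction to $V$ of the independence model of $D$ (which, $D$ being a DAG, is the $d$-separation model, equivalently $\Im_m(D)$ by Definition \ref{RS}); that is, $G$ and the marginal of $D$ over $H$ induce the same independence model, which is the claim.

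Acyclicity is immediate. Each $H_{XY}$ is a source in $D$, so no directed cycle can pass through a latent vertex, and the directed edges of $D$ among observed vertices are precisely those of $G$, which has no directed cycle because an MVR chain graph has no partially directed cycle. The same observation shows that directed paths among observed vertices coincide in $G$ and $D$, so $an_D(S)\cap V=an_G(S)$ for every $S\subseteq V$; since anterior and ancestor sets agree here, I will use this when comparing the collider-activation conditions of the two separation criteria.

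The heart of the argument is a path correspondence. Given a path $\pi$ in $G$, replace each bidirected edge $A\leftrightarrow B$ on it by the segment $A\gets H_{AB}\to B$ to obtain a path $\pi'$ in $D$; conversely, because every latent vertex has exactly two incident edges, both outgoing, any path in $D$ that meets $H_{AB}$ must traverse the full fork $A\gets H_{AB}\to B$, which collapses back to $A\leftrightarrow B$. This correspondence preserves collider status at every observed vertex, since a bidirected edge has an arrowhead at each endpoint and so does the replacing fork, leaving the arrowhead/tail pattern at $A$ and $B$ unchanged. Moreover each latent vertex occurs on $\pi'$ only as a noncollider $\gets H\to$ that is never in $Z$, hence always an active noncollider. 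Combining collider preservation with the identity $an_D(Z)\cap V=an_G(Z)$, a noncollider is off $Z$ on $\pi$ exactly when its image is off $Z$ on $\pi'$, and an observed collider lies in $an_G(Z)$ exactly when it lies in $an_D(Z)$; therefore $\pi$ is $m$-connecting given $Z$ in $G$ if and only if $\pi'$ is $d$-connecting given $Z$ in $D$. It follows that $X$ and $Y$ are $m$-connected given $Z$ in $G$ iff they are $d$-connected given $Z$ in $D$, which gives the desired separation equivalence.

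I expect the main obstacle to be the bookkeeping in this path correspondence: verifying that collider versus noncollider status is preserved in each local configuration ($\to X\leftrightarrow$, $\leftrightarrow X\leftrightarrow$, $\leftrightarrow X\gets$, and their reverses), and confirming that the inserted latent forks neither create new active colliders nor block previously active connections. Once that case analysis is settled, the equivalence of the connecting conditions, and hence the equality of the two independence models under marginalization of $H$, follows directly.
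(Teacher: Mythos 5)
Your proposal is correct, but it takes a genuinely different route from the paper. The paper's proof is a two-line reduction: by its Theorem \ref{th1} an MVR chain graph is an ancestral graph, and the claim then follows by citing Theorem 6.3 of Richardson and Spirtes, which is exactly the statement that ancestral graphs are closed representations of DAG models under marginalization (and conditioning). What you do instead is inline the relevant special case of that cited theorem: you build the canonical DAG explicitly (replace each $A\leftrightarrow B$ by $A\gets H_{AB}\to B$, keep directed edges), check acyclicity, establish $an_D(Z)\cap V = an_G(Z)$, and prove the path correspondence showing $m$-connection in $G$ coincides with $d$-connection in $D$ among observed vertices. Your key observations are all sound: latent vertices are sources, so they can only appear as unconditioned noncolliders on paths between observed endpoints and never as ancestors-of-$Z$ issues; arrowhead/tail marks at observed vertices are unchanged, so collider status is preserved; and since $X,Y,Z\subseteq V$, every latent vertex on a path is interior and forces traversal of its full fork. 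The trade-off: the paper's citation-based argument is shorter, reuses the ancestral-graph machinery it has already set up (the same Theorem \ref{th1} also feeds Corollary \ref{th2} on maximality), but hides the construction inside the reference; your argument is self-contained and elementary, does not need Theorem \ref{th1} at all, and delivers an explicit DAG witness --- precisely the latent-variable construction the paper's introduction describes only informally. The remaining work you flag (the local case analysis for collider preservation) is routine bookkeeping and does not hide any obstruction.
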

\begin{proof}
	From Theorem \ref{th1}, we know that every MVR chain graph is an ancestral graph. The result follows directly from \citep[Theorem 6.3]{rs}.
\end{proof}

\begin{corollary}\label{th2}
	If $G$ is an MVR chain graph, then $G$ is a maximal ancestral graph.
\end{corollary}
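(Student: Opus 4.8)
The plan is to combine Theorem~\ref{th1}, which already gives that $G$ is ancestral, with the standard characterization of maximality for ancestral graphs due to Richardson and Spirtes~\citep{rs}: an ancestral graph is maximal if and only if there is no \emph{inducing path} between any two non-adjacent vertices, where an inducing path between $\alpha$ and $\beta$ is a path on which every non-endpoint vertex is a collider and is an ancestor of $\alpha$ or of $\beta$. Since ancestrality is in hand, it remains only to rule out inducing paths between non-adjacent vertices. I would therefore fix non-adjacent $\alpha,\beta$, assume toward a contradiction an inducing path $\pi=\langle v_0=\alpha,v_1,\dots,v_n=\beta\rangle$ with $n\ge 2$, and exploit the rigid structure of MVR chain graphs.

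The key structural observation is that an MVR chain graph carries only directed and bidirected edges, so any edge with an arrowhead at \emph{both} endpoints must be bidirected. Each internal vertex $v_i$ ($1\le i\le n-1$) is a collider, hence any edge of $\pi$ joining two consecutive internal colliders receives an arrowhead at both ends and is therefore bidirected; consequently all internal vertices lie in a single chain component $C$ and are joined along $\pi$ by bidirected edges. Moreover the two end edges, joining $\alpha$ to $v_1$ and $v_{n-1}$ to $\beta$, each carry an arrowhead at the internal (collider) endpoint, so each is either directed into $C$ or bidirected within $C$. Writing $o(v)$ for the position of the chain component of $v$ in the total order $\prec$ (components drawn further to the right being larger, as in Figure~\ref{Fig:mvr}), and recalling from the proof of Theorem~\ref{th1} that directed edges point from right to left while bidirected edges stay inside a component, both end edges force $o(\alpha)\ge o(C)$ and $o(\beta)\ge o(C)$.

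The contradiction then comes from the ancestor requirement in the definition of an inducing path. Every internal $v_i$ must be a proper ancestor of $\alpha$ or of $\beta$, i.e.\ there is a directed path from $v_i$ down to that endpoint; since every directed edge strictly decreases the $\prec$-order, this would place $\alpha$ (or $\beta$) in a component strictly to the left of $C$, giving $o(\alpha)<o(C)$ (or $o(\beta)<o(C)$). This contradicts the previous paragraph. Hence no inducing path with $n\ge 2$ can exist between non-adjacent vertices, so every missing edge admits a separating set and, by the Richardson--Spirtes characterization, $G$ is maximal.

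I would flag the combinatorial heart of the argument as the main obstacle: the claim that all internal colliders collapse into one chain component and the bookkeeping of the chain-component order. Two further points require care: fixing the orientation convention for directed edges consistently with the proof of Theorem~\ref{th1}, and using that for ADMGs and MVR chain graphs the notions of \emph{anterior} and \emph{ancestor} coincide (as noted after Definition~\ref{RS2}), so that the inducing-path and $m$-separation statements may be phrased in terms of ancestors. As a safeguard, should one prefer not to invoke the ``no inducing path'' characterization as a black box, the same order argument also exhibits an explicit separator, for instance $an(\{\alpha,\beta\})\setminus\{\alpha,\beta\}$, which can be checked to $m$-separate $\alpha$ and $\beta$ whenever no inducing path is present.
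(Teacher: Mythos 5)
Your proposal is correct and takes essentially the same route as the paper: both rest on Theorem~\ref{th1} plus the Richardson--Spirtes characterization of maximality (a nonmaximal ancestral graph must contain a primitive inducing chain/path between some pair of non-adjacent vertices), and both rule out such a path using the rigid component structure of MVR chain graphs. The paper's proof compresses the final step into the assertion that the inducing chain would create a partially directed cycle, while your chain-component ordering argument spells out exactly why that contradiction arises, so your version is a more detailed rendering of the same argument rather than a different one.
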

\begin{proof}
To characterize maximal ancestral graphs, we need the following notion: A chain $<r, q_1,\cdots,q_p,s>$ is a
primitive inducing chain between $r$ and $s$ if and only if for every $i$, $1\le i\le p$:
\begin{itemize}
\item $q_i$ is a collider on the chain; and
\item $q_i \in an(\{r\} \cup \{s\}).$
\end{itemize}
Based on Corollary 4.4 in \citep{rs}, every nonmaximal ancestral graph contains a primitive
inducing chain between a pair of nonadjacent vertices. So, it is enough to show that an MVR chain graph $G$ does not contain a primitive
inducing chain between any pair of nonadjacent vertices of $G$. For this purpose, consider that $r$ and $s$ are a pair of nonadjacent vertices in MVR chain graph $G$ such that chain $<r, q_1,\cdots,q_p,s>$ is a
primitive inducing chain between $r$ and $s$. So, for every $i$, $1\le i\le p$: $q_i$ is a collider on the chain. Since, for every $i$, $1\le i\le p$: $q_i \in an(\{r\} \cup \{s\})$, there is a partially directed cycle in $G$, which is a contradiction. 
\end{proof}

\subsection{Global and Pairwise Markov Properties}
The following properties have been defined for conditional independences of probability distributions. Let $A, B, C$ and $D$
be disjoint subsets of $V_G$, where $C$ may be the empty set.

\noindent 1. Symmetry: $A\!\perp\!\!\!\perp B \Rightarrow B \!\perp\!\!\!\perp A$;

\noindent 2. Decomposition: $A\!\perp\!\!\!\perp BD | C \Rightarrow (A\!\perp\!\!\!\perp B | C \textrm{ and }  A\!\perp\!\!\!\perp D | C)$;

\noindent 3. Weak union: $A\!\perp\!\!\!\perp BD | C \Rightarrow (A\!\perp\!\!\!\perp B | DC \textrm{ and }  A\!\perp\!\!\!\perp D | BC)$;

\noindent 4. Contraction: $(A\!\perp\!\!\!\perp B | DC \textrm{ and }  A\!\perp\!\!\!\perp D | C) \Leftrightarrow A\!\perp\!\!\!\perp BD | C$;

\noindent 5. Intersection: $(A\!\perp\!\!\!\perp B | DC \textrm{ and }  A\!\perp\!\!\!\perp D | BC) \Rightarrow A\!\perp\!\!\!\perp BD | C$;

\noindent 6. Composition: $(A\!\perp\!\!\!\perp B | C \textrm{ and }  A\!\perp\!\!\!\perp D | C) \Rightarrow A\!\perp\!\!\!\perp BD | C$.
An independence model is a \textit{semi-graphoid} if it satisfies the first four independence properties listed above. Note that every probability distribution $p$ satisfies the \textit{semi-graphoid} properties \citep{studeny1}. If a semi-graphoid further satisfies the intersection property, we say it is a \textit{graphoid} \citep{pp,studeny,studeny1}. A \textit{compositional graphoid} further satisfies the composition property \citep{sw}. If a semi-graphoid further satisfies the composition property, we say it is a \textit{compositional semi-graphoid}.

For a node $i$ in the connected component $T$, its \textit{past}, denoted by $pst(i)$, consists of all nodes in components having a higher order than $T$. To define pairwise Markov properties for MVR CGs, we use the following notation for parents, anteriors
and the past of node pair $i, j$:
$pa_G(i,j)=pa_G(i)\cup pa_G(j)\setminus\{i,j\},$
$ant(i,j)=ant(i)\cup ant(j)\setminus\{i,j\},$ and 
$pst(i,j)=pst(i)\cup pst(j)\setminus\{i,j\}.$
The distribution $\mathcal{P}$ of $(X_n)_{n\in V}$ satisfies a pairwise Markov property (Pm),  for $m = 1, 2, 3, 4$, with respect to MVR CG($G$) if for every uncoupled pair of nodes
$i$ and $j$ (i.e., there is no directed or bidirected edge between $i$ and $j$): 

	\noindent (P1):\label{Pair1} $i\!\perp\!\!\!\perp j|pst(i,j)\quad$, (P2):\label{Pair2} $i\!\perp\!\!\!\perp j|ant(i,j)\quad$, (P3):\label{Pair3} $i\!\perp\!\!\!\perp j|pa_G(i,j)\quad$, and  (P4):\label{Pair4} $i\!\perp\!\!\!\perp j|pa_G(i)$ if $i\prec j$.
    
Notice that in (P4), $pa_G(i)$ may be replaced by $pa_G(j)$ whenever the two nodes are in the same connected component. Sadeghi and Wermuth in \citep{sw} proved that all of above mentioned pairwise Markov properties are equivalent for compositional graphoids. Also, they show that each one of
the above listed pairwise Markov properties is equivalent to the global Markov properties in Definitions \ref{RS}, \ref{RS2} \citep[Corollary 1]{sw}. The necessity of intersection and composition properties follows from \citep[Section 6.3]{sl}.

\subsection{Block-recursive, Multivariate Regression (MR), and Ordered Local Markov Properties}
\begin{definition}
	Given a chain graph $G$, the set $Nb_G (A)$ is the union of $A$ itself and the set of nodes $w$ that are neighbors of $A$, that is, coupled by a bidirected edge to some node $v$ in $A$. Moreover, the set of non-descendants $nd_D(T)$ of a chain component $T$, is the union of all components $T'$ such that there is no directed path from $T$ to $T'$ in the directed graph of chain components $D$.
\end{definition} 
\begin{definition}\label{def1}\emph{(multivariate regression (MR) Markov property for MVR CGs \citep{ml2})}\footnote{A generalization of this property for regression graphs is the ordered regression graph Markov property in \citep{rov}.}
	Let G be a chain graph with chain components $(T|T\in\mathcal{T})$. A joint distribution P of the random vector X obeys the \emph{multivariate regression (MR) Markov property} with respect to $G$ if it satisfies the following independences. For all $T\in \mathcal{T}$ and for all $A \subseteq T$:
	
		\noindent (MR1)  if A is connected:$A\!\perp\!\!\!\perp [pre(T)\setminus  pa_G(A)]|pa_G(A)$.
		
		\noindent (MR2) if $A$ is disconnected with connected components $A_1,\dots,A_r$: $A_1\!\perp\!\!\!\perp \dots \!\perp\!\!\!\perp A_r|pre(T)$. 
	
\end{definition}
\begin{remark}{\label{rem1}}\citep[Remark 2]{ml2}
	One immediate consequence of Definition \ref{def1} is that if the probability density p(x) is strictly positive, then it factorizes according to the directed acyclic graph of the chain components: $p(x)=\prod_{T\in\mathcal{T}}p(x_T|x_{pa_D(T)}).$
\end{remark}
\begin{definition}\label{BR}\emph{(Chain graph Markov property of type IV \citep{d})}
	Let G be a chain graph with chain components $(T | T \in\mathcal{T})$ and directed acyclic graph $D$ of components. The joint probability distribution of $X$ obeys \emph{the block-recursive Markov property of type IV} if it satisfies the following independencies:
	
		\noindent (IV0): $T\!\perp\!\!\!\perp [nd_D(T)\setminus  pa_D(T)]|pa_D(T)$, for all $T\in \mathcal{T}$;
		
		\noindent (IV1): $A\!\perp\!\!\!\perp [pa_D(T)\setminus  pa_G(A)]|pa_G(A)$, for all $T\in \mathcal{T}$, and for all $A\subseteq T$;
		
		\noindent (IV2): $A\!\perp\!\!\!\perp [T\setminus Nb_G(A)]|pa_D(T)$, for all $T\in \mathcal{T}$, and for all connected subsets $A\subseteq T.$
	
\end{definition}
The following example shows that independence models, in general, resulting from Definitions \ref{def1}, \ref{BR} are different.
\begin{example}
	Consider the MVR chain graph $G$ in Figure \ref{Fig:11}.
	\begin{figure}[ht]
		\centering
		\includegraphics[scale=.75]{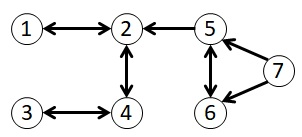}
		\caption{an MVR CG with chain components: $\mathcal{T}=\{T_1=\{1,2,3,4\} , T_2=\{5,6\}, T_3=\{7\}\}$.} \label{Fig:11}
	\end{figure}
For the connected set $A=\{1,2\}$ the condition (MR1) implies that $1,2\!\perp\!\!\!\perp 6,7|5$ while the condition (IV2) implies that $1,2\!\perp\!\!\!\perp 6|5$, which is not implied directly by (MR1) and (MR2). Also, the condition (MR2) implies that $1\!\perp\!\!\!\perp 3,4|5,6,7$ while the condition (IV2) implies that $1\!\perp\!\!\!\perp 3,4|5,6$, which is not implied directly by (MR1) and (MR2).
\end{example}
Theorem 1 in \citep{ml2} states that for a given chain graph $G$, the multivariate regression Markov property is equivalent to the block-recursive Markov property of type IV. Also, Drton in \citep[Section 7 Discussion]{d} claims (without proof) that the block-recursive Markov property of type IV can be
shown to be equivalent to the global Markov property proposed in \citep{rs, r2}. 

Now, we introduce a local Markov property for ADMGs proposed by Richardson in \citep{r2}, which is an extension of the
local well-numbering Markov property for DAGs introduced in \citep{ldll}. For this purpose, we need to consider the following definitions and notations:
\begin{definition}
	For a given acyclic directed mixed graph (ADMG) G, the induced bidirected graph
	$(G)_\leftrightarrow$ is the graph formed by removing all directed edges from G.
	The district (aka c-component) for a vertex x in G is the connected component of x in $(G)_\leftrightarrow$, or equivalently $$dis_G(x)=\{y | y\leftrightarrow\dots\leftrightarrow x \textrm{ in G, or }x=y\}.$$ As usual we apply the definition disjunctively to sets: $dis_A(B)=\cup_{x\in B}dis_A(x).$ A set C is path-connected in $(G)_\leftrightarrow$ if every pair of vertices in C are connected via a path in $(G)_\leftrightarrow$; equivalently, every vertex in C has the same district in G.
\end{definition}
\begin{definition}
	In an ADMG, a set A is said to be ancestrally closed if
	$x\rightarrow\dots\rightarrow a$ in G with $a\in A$ implies that $x\in A$. The set of ancestrally closed sets is defined as follows:  $$\mathcal{A}(G)=\{A|an_G(A)=A\}.$$
	If $A$ is an ancestrally closed set in an ADMG ($G$), and $x$ is a vertex in $A$ that has no children in $A$ then we define the \emph{Markov blanket} of a vertex $x$ with respect to the induced subgraph on $A$ as
	$$mb(x,A)= pa_{G}(dis_{G_A}(x))\cup (dis_{G_A}(x)\setminus\{x\}),$$
	where $dis_{G_A}$ is the district of $x$ in the induced subgraph $G_A$.
\end{definition} 
\begin{definition}
	Let $G$ be an acyclic directed mixed graph. Specify a total ordering ($\prec$) on the vertices
	of $G$, such that $x\prec y \Rightarrow y\not\in an(x)$; such an ordering is said to be consistent with $G$. Define $pre_{G,\prec}(x)=\{v|v\prec x \textrm{ or } v=x\}.$
\end{definition}
\begin{definition}[Ordered local Markov property]\label{lo}
	Let $G$ be an acyclic directed mixed graph. An independence model $\Im$ over the node
	set of $G$ satisfies the ordered local Markov property for $G$, with
	respect to the ordering $\prec$, if for any $x$, and ancestrally closed set $A$ such that $x\in A\subseteq pre_{G,\prec}(x)$,
	$$\{x\}\perp\!\!\!\perp [A\setminus(mb(x,A)\cup \{x\})]|mb(x,A).$$ 
\end{definition}
Since MVR chain graphs are a subclass of ADMGs, the ordered local Markov property in Definition \ref{lo} can be used as a local Markov property for MVR chain graphs.

Five of the Markov properties introduced in this and the previous subsection are equivalent for all probability distributions, as shown in the following theorem.

\begin{theorem}\label{thm1}
	Let $G$ be an MVR chain graph. For an independence model $\Im$ over the node
	set of $G$, the following conditions are equivalent:
	
	\noindent(i) $\Im$ satisfies the global Markov property w.r.t. $G$ in Definition \ref{RS};
	
	\noindent(ii) $\Im$ satisfies the global Markov property w.r.t. $G$ in Definition \ref{RS2};
	
	\noindent(iii) $\Im$ satisfies the block recursive Markov property w.r.t. $G$ in Definition \ref{BR};
	
	\noindent(iv) $\Im$ satisfies the MR Markov property w.r.t. $G$ in Definition \ref{def1}.
	
	\noindent(v) $\Im$ satisfies the ordered local Markov property w.r.t. $G$ in Definition \ref{lo}.
\end{theorem}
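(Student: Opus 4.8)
The plan is to lean on the three equivalences already available from the literature and to supply only the two implications needed to weld the remaining clusters into a single cycle. Concretely, I would establish (i)$\Leftrightarrow$(ii), (v)$\Leftrightarrow$(i), and (iii)$\Leftrightarrow$(iv) from cited results, and then prove (ii)$\Rightarrow$(iv) and (iv)$\Rightarrow$(v); these close the cycle (i)$\Rightarrow$(ii)$\Rightarrow$(iv)$\Rightarrow$(v)$\Rightarrow$(i) and attach (iii) through (iv), giving the full equivalence. Throughout I would use only the semi-graphoid axioms (decomposition, weak union, contraction), which every probability distribution satisfies, so that no positivity, intersection, or composition is needed.

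The free equivalences are quick. By Theorem \ref{th1}, $G$ is an ancestral graph, so $\Im_m(G)=\Im_{m^\ast}(G)$ by \citep[Theorem 3.18]{rs}; hence any $\Im$ satisfies (i) if and only if it satisfies (ii). Because an MVR chain graph is by definition an acyclic directed mixed graph, Richardson's equivalence of the ordered local and global Markov properties for ADMGs \citep{r2} gives (v)$\Leftrightarrow$(i) for arbitrary distributions. Finally (iii)$\Leftrightarrow$(iv) is Theorem~1 of \citep{ml2}.

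For the first bridge, (ii)$\Rightarrow$(iv), I would check that each independence demanded by the MR Markov property is an instance of $m^\ast$-separation. The key structural fact is that in an MVR chain graph the districts of $G$ coincide with its chain components, since bidirected edges occur only inside a chain component; this dictionary lets me translate Definition \ref{def1} into separation statements in augmented graphs. For (MR1), with $A\subseteq T$ connected, I would pass to $(G_{ant(A\cup pre(T))})^a$ and argue that any collider path in $G$ from $A$ to a vertex of $pre(T)\setminus pa_G(A)$ must enter the component $T$ through a directed edge emanating from a parent of $A$, so every such path meets $pa_G(A)$; thus $pa_G(A)$ separates $A$ from $pre(T)\setminus pa_G(A)$ and $A\perp\!\!\!\perp[pre(T)\setminus pa_G(A)]\mid pa_G(A)$ holds. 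For (MR2) the disconnected pieces $A_1,\dots,A_r$ of $A$ lie in distinct districts of $G_A$, and once all of $pre(T)$ is in the conditioning set no collider path can join two different pieces, which yields the mutual separation $A_1\perp\!\!\!\perp\dots\perp\!\!\!\perp A_r\mid pre(T)$.

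The second bridge, (iv)$\Rightarrow$(v), is where I expect the real difficulty. Here I must derive, for every vertex $x$ and every ancestrally closed $A\subseteq pre_{G,\prec}(x)$ in which $x$ has no children, the single-vertex statement $\{x\}\perp\!\!\!\perp[A\setminus(mb(x,A)\cup\{x\})]\mid mb(x,A)$ out of the blockwise MR independences. The plan is to work inside the induced ADMG $G_A$, to recognize $dis_{G_A}(x)$ as the relevant part of a chain component and $mb(x,A)=pa_G(dis_{G_A}(x))\cup(dis_{G_A}(x)\setminus\{x\})$ as the matching conditioning set, and then to assemble the per-vertex statement from (MR1) and (MR2) applied in $G_A$ using decomposition, weak union, and contraction. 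The main obstacle is precisely the bookkeeping of conditioning sets: the MR property conditions on $pa_G(A)$ and on the whole past $pre(T)$, whereas the ordered local property conditions on the Markov blanket relative to a varying ancestrally closed set $A$, and showing that these collapse to the same independence after applying the semi-graphoid axioms is delicate. This step is exactly the equivalence that Drton \citep{d} asserted without proof, and carrying it out carefully is the core of the argument.
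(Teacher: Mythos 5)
Your overall architecture is valid and genuinely different from the paper's: the paper proves the single cycle (i)$\Rightarrow$(ii)$\Rightarrow$(iii)$\Rightarrow$(iv)$\Rightarrow$(v)$\Rightarrow$(i), doing (ii)$\Rightarrow$(iii), (iii)$\Rightarrow$(iv), and (iv)$\Rightarrow$(v) by hand and citing \citep{r2} only for the two ends, whereas you outsource (iii)$\Leftrightarrow$(iv) to \citep[Theorem 1]{ml2} and replace the paper's two middle steps by one direct graphical bridge (ii)$\Rightarrow$(iv). That bridge is correct in substance: for $A\subseteq T$ one has $ant(A\cup pre(T))=A\cup pre(T)$, and in $G_{A\cup pre(T)}$ any collider path leaving $A$ consists of bidirected edges inside $A$ followed by at most one edge $v\gets w$ with $w\in pa_G(A)$, so in the augmented graph every neighbor of a vertex of $A$ (resp.\ of a connected piece $A_i$) lies in $A\cup pa_G(A)$ (resp.\ $A_i\cup pa_G(A_i)$); this yields (MR1) and (MR2) as $m^\ast$-separations. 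The citations you invoke are ones the paper itself treats as available, so that part of the plan is legitimate.

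The genuine gap is (iv)$\Rightarrow$(v): you present a plan and explicitly defer "the bookkeeping," but that bookkeeping is the actual content of the implication, and your plan contains a misstep. You propose to use "(MR1) and (MR2) applied in $G_A$," but the MR property is an assumption about $G$ only; you are not entitled to assume $\Im$ satisfies MR with respect to the induced subgraph $G_A$ (that would amount to closure of the property under ancestral marginalization, which is essentially what is being proved). The correct move, which the paper makes, is to apply (MR1)/(MR2) \emph{in $G$} to the set $D=dis_{G_A}(x)$ — a connected subset of the chain component $T$ of $G$ — and then shrink the conditioning set from $pa_G(D)$ or $pre(T)$ down to $mb(x,A)=pa_G(D)\cup(D\setminus\{x\})$ by semi-graphoid steps. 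Moreover this reduction is not uniform: when $A\cap T$ is connected, (MR1) for $D$ plus symmetry, weak union, and decomposition suffice; but when $A\cap T$ is disconnected one must first combine (MR2), which conditions on $pre(T)$, with (MR1), which conditions on $pa_G(D)$, via contraction to obtain $D\perp\!\!\!\perp[(T\setminus Nb_G(D))\cup(pre(T)\setminus pa_G(D))]\,|\,pa_G(D)$, and when $T\setminus Nb_G(D)$ is itself disconnected an additional round of (MR1) applied to one of its pieces, plus symmetry and contraction, is needed before the final weak union/decomposition (these are the paper's Case 1, Case 2, and Sub-cases I and II in Appendix A). Until this case analysis is carried out, your proposal has identified the ingredients of the hardest step but not proved it.
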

\begin{proof}
	See Appendix A for the proof of this theorem.
\end{proof}

\subsection{An Alternative Local Markov Property for MVR Chain Graphs}
In this subsection we
formulate an alternative local Markov property for MVR chain graphs. This property is different from and much more concise than the ordered Markov property proposed in \citep{r2}. The new local Markov property can be used to parameterize distributions efficiently when MVR chain graphs are learned from data, as done, for example, in \citep[Lemma 9]{jv3}.  While the new local Markov property is not equivalent to the five ones in Theorem \ref{thm1} in general, we show that it is equivalent to the global and ordered local Markov properties of MVR chain graphs for compositional graphoids.
\begin{definition}
	If there is a bidirected edge between vertices $u$ and $v$, $u$ and $v$ are said to be neighbors. The boundary $bd(u)$ of a vertex $u$ is the set of vertices in $V\setminus \{u\}$ that are parents or neighbors of vertex $u$. The descendants of vertex $u$ are $de(u)=\{v| u \textrm{ is an ancestor of } v\}$. The non-descendants of vertex $u$ are $nd(u)= V\setminus (de(u)\cup \{u\})$.
\end{definition}

\begin{definition}\label{local}
	The local Markov property for an MVR chain graph $G$ with vertex set $V$ holds if, for every $v\in V$: 
	$v\perp\!\!\!\perp [nd(v)\setminus bd(v)]|pa_G(v).$
\end{definition}
\begin{remark}
	In DAGs, $bd(v)=pa_G(v)$, and the local Markov property given above reduces to the directed local Markov property introduced by  Lauritzen et al. in \citep{ldll}. Also, in  covariance graphs \footnote{Equivalently, bidirected graphs, as explained in \citep[section 4.1]{r2}.} the local Markov property given above reduces to the dual local Markov property introduced by Kauermann in \citep[Definition 2.1]{k}.
\end{remark}

\begin{theorem}\label{thmlo}
	Let $G$ be an MVR chain graph. If an independence model $\Im$ over the node
	set of G is a compositional semi-graphoid, then $\Im$ satisfies the alternative local Markov property w.r.t. $G$ in Definition \ref{local} if
	and only if it satisfies the global Markov property w.r.t. $G$ in Definition \ref{RS2}.
\end{theorem}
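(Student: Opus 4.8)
The plan is to establish the two implications of the biconditional by leveraging the fact (from Theorem~\ref{thm1}) that the global Markov property of Definition~\ref{RS2} is equivalent, for all distributions, to the ordered local Markov property of Definition~\ref{lo}. Since the ordered local property already encodes global independence, the cleanest route is to show that the alternative local property of Definition~\ref{local} and the ordered local property of Definition~\ref{lo} are mutually derivable in the presence of the compositional semi-graphoid axioms. I would therefore reduce the theorem to proving: (a) the global (equivalently ordered local) property implies the alternative local property outright, and (b) the alternative local property, together with composition, decomposition, weak union, and contraction, implies the ordered local property.

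For direction (a), I would argue that for a fixed vertex $v$, the pair $(\{v\}, A)$ with $A = pre_{G,\prec}(v)$ (taking $\prec$ consistent with $G$ and $v$ last among its own ancestors) is an instance of the ordered local property, and that $mb(v,A)$ specializes in this setting so that the independence $\{v\} \perp\!\!\!\perp [nd(v)\setminus bd(v)] \mid pa_G(v)$ follows by an application of the m-separation criterion of Definition~\ref{RS2}. Concretely, I would verify directly from the graph that, given $pa_G(v)$, every vertex in $nd(v)\setminus bd(v)$ is m-separated from $v$: any m-connecting path would have to enter $v$ either through a parent (blocked, since $pa_G(v)$ is conditioned on) or through a bidirected edge to a neighbor (but neighbors lie in $bd(v)$), and a collider just past $v$ cannot be activated by an ancestor in $nd(v)$. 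This is a finite graph-theoretic check using that directed edges point from higher-order to lower-order components and that bidirected edges stay within a chain component.

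The substantive content is direction (b), which is where the compositional graphoid axioms become essential and which I expect to be the main obstacle. Starting only from the single-vertex statements $v \perp\!\!\!\perp [nd(v)\setminus bd(v)] \mid pa_G(v)$, I must recover the full family of ordered local independences indexed by arbitrary ancestrally closed sets $A$ with $x \in A \subseteq pre_{G,\prec}(x)$. The strategy is to process vertices in an order consistent with $\prec$ and build up the required statements inductively: composition lets me merge independences that share a conditioning set into a statement with a larger right-hand set, decomposition and weak union let me pare the conditioning and separated sets down to the exact Markov-blanket form $mb(x,A)$, and contraction lets me chain successive vertices together. The delicate point is that $mb(x,A) = pa_G(dis_{G_A}(x)) \cup (dis_{G_A}(x)\setminus\{x\})$ refers to the district of $x$ within the restricted subgraph $G_A$, so the bidirected-neighbor structure must be reassembled from the per-vertex boundaries $bd(\cdot)$; handling the district correctly—especially showing that conditioning on the parents of the whole district, rather than just $pa_G(x)$, is justified—will require careful use of composition to pool the neighbor relations across the district and of weak union to move the district's parents into the conditioning set.

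I would mirror the structure of the classical DAG argument that the local Markov property implies the ordered local (well-numbering) property, adapting it to the bidirected components: the DAG proof relies on the graphoid axioms to pass from single-parent-set statements to statements conditioned on arbitrary ancestral sets, and here the analogous passage must additionally absorb the c-component (district) structure, which is exactly where composition is indispensable and why a bare semi-graphoid does not suffice.
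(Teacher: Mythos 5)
Your overall architecture matches the paper's: prove that the global property implies the alternative local one by a direct separation argument, then close the loop by deriving from the alternative local property some Markov property already known (via Theorem~\ref{thm1}) to be equivalent to the global one. But both legs of your proposal have genuine gaps, and the first one is fatal as written. In direction (a), your blocking argument breaks at exactly the delicate case: a path may leave $v$ by a bidirected edge to a neighbor $n$ and then continue along a directed edge $n \to w$. On such a path $n$ is a \emph{non}-collider (the vertex just past $v$ is not a collider here), and since you condition only on $pa_G(v)$ and $n \notin pa_G(v)$, the path is not blocked at $n$; the fact that $n \in bd(v)$ is irrelevant, because $n$ is an intermediate vertex of the path, not an endpoint. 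Now $w$ is a descendant of $n$ but need not be a descendant of $v$, so under the paper's literal, per-vertex definition of $nd(v)$ (descendants via directed paths out of $v$), $w$ can lie in $nd(v)\setminus bd(v)$. Minimal instance: $G$ with edges $v \leftrightarrow n \to y$. Then $nd(v)\setminus bd(v)=\{y\}$ and $pa_G(v)=\emptyset$, yet the path $v \leftrightarrow n \to y$ is $m$-connecting given $\emptyset$, so the global property does not yield $v \perp\!\!\!\perp y \mid \emptyset$; $\Im_m(G)$ is a compositional graphoid satisfying the global property but violating Definition~\ref{local} read literally. The implication you are checking only becomes true if non-descendants are taken at the level of chain components, so that every node in a component below $v$'s component --- in particular every descendant of a neighbor of $v$ --- is excluded from $nd(v)$; with that reading, paths through neighbors can only descend and never return to the separated set, and your check (and the paper's one-line version, whose claim that $\{v\}\cup(nd(v)\setminus bd(v))\cup pa_G(v)$ is an ancestor set also presupposes that reading) goes through. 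You must adopt and state that reading explicitly, or the step fails. Note also that your suggested shortcut via the ordered local instance at $A=pre_{G,\prec}(v)$ does not work: there $dis_{G_A}(v)$ is the whole chain component $T$ of $v$, so $mb(v,A)=pa_G(T)\cup(T\setminus\{v\})$, which strictly contains $pa_G(v)$ in general; the direct separation check really is needed.

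In direction (b) your text is a plan, not a proof: you correctly identify the hard part --- reassembling the district-based Markov blanket $mb(x,A)=pa_G(dis_{G_A}(x))\cup(dis_{G_A}(x)\setminus\{x\})$ for arbitrary ancestrally closed $A$ from per-vertex boundary statements --- but you do not carry it out, and that is where all the work lies. The paper sidesteps this entirely by choosing a much easier target: the MR property of Definition~\ref{def1} rather than the ordered local property of Definition~\ref{lo}. For connected $A\subseteq T$, each $x\in A$ gives $\{x\}\perp\!\!\!\perp nd(x)\setminus bd(x)\mid pa_G(x)$; since both $pre(T)\setminus pa_G(A)$ and $pa_G(A)\setminus pa_G(x)$ lie inside $nd(x)\setminus bd(x)$, decomposition and weak union yield $\{x\}\perp\!\!\!\perp pre(T)\setminus pa_G(A)\mid pa_G(A)$, and composition over $x\in A$ gives (MR1); (MR2) is handled the same way, absorbing the other connected components $A_j$ into the separated set. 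Then MR~$\Rightarrow$~global is exactly Theorem~\ref{thm1}\,(iv)$\Rightarrow$(i). Your route would instead force you to re-derive, on top of this, essentially the whole (iv)$\Rightarrow$(v) step of Theorem~\ref{thm1}'s proof (the case analysis over whether $A\cap T$ is connected and the sub-cases on $T\setminus Nb_G(A_d)$), and the inductive ``process vertices in order'' scheme you sketch is not developed enough to check. Recommendation: keep the reduction via Theorem~\ref{thm1}, but switch the target of direction (b) to the MR property, and repair direction (a) by working with component-level non-descendants as described above.
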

\begin{proof}
	\noindent $(Global\Rightarrow Local)$: Let $X=\{v\},    Y=nd(v)\setminus bd(v), \textrm{ and } Z=pa_G(v)$. So, $an(X\cup Y\cup S)=v \cup (nd(v)\setminus bd(v))\cup pa_G(v)$ is an ancestor set, and $pa_G(v)$ separates $v$ from $nd(v)\setminus bd(v)$ in $(G_{v \cup (nd(v)\setminus bd(v))\cup pa_G(v)})^a$; this shows that the global Markov property in Definition \ref{RS2} implies the local Markov property in Definition \ref{local}.
	
	\noindent $(Local\Rightarrow MR)$: We prove this by considering the following two cases:
	
	\noindent Case 1): Let $A\subseteq T$ is connected. Using the alternative local Markov property for each $x\in A$ implies that: $\{x\} \perp\!\!\!\perp [nd(x)\setminus bd(x)]|pa_G(x)$. Since $(pre(T)\setminus pa_G(A))\subseteq (nd(x)\setminus bd(x))$, using the decomposition and weak union property give: $\{x\} \perp\!\!\!\perp (pre(T)\setminus pa_G(A))|pa_G(A)\textrm{, for all } x\in A$. Using the composition property leads to (MR1): $A \perp\!\!\!\perp (pre(T)\setminus pa_G(A))|pa_G(A)$.
	
	\noindent Case 2): Let $A\subseteq T$ is disconnected with connected components $A_1,\dots,A_r$. For $1\le i\ne j\le r$ we have: $\{x\} \perp\!\!\!\perp [nd(x)\setminus bd(x)]|pa_G(x) \textrm{, for all } x\in A_i$. Since $[(pre(T)\setminus pa_G(A))\cup A_j]\subseteq (nd(x)\setminus bd(x))$, using the decomposition and weak union property give: $\{x\} \perp\!\!\!\perp A_j| pre(T)\textrm{, for all } x\in A_i$. Using the composition property leads to (MR2): $A_i\perp\!\!\!\perp  A_j|pre(T), \textrm{ for all } 1\le i\ne j\le r$.
	
	\noindent $(MR\Rightarrow Global)$: The result follows from Theorem \ref{thm1}.
\end{proof}
The necessity of composition property in Theorem \ref{thmlo} follows from the fact that local and global Markov properties for bidirected graphs, which are a subclass of MVR CGs, are equivalent only for compositional semi-graphoids \citep{k,br}.

\section{An Alternative Factorization for MVR Chain Graphs}\label{sec5}
According to the definition of MVR chain graphs, it is obvious that they are a subclass of acyclic directed mixed graphs (ADMGs). In this section, we derive an explicit factorization criterion for MVR chain graphs based on the proposed factorization criterion for acyclic directed mixed graphs in \citep{er}. For this purpose, we need to consider the following definition and notations:

\begin{definition}\label{def5}
	An ordered
	pair of sets $(H,T)$ form the head and tail of a term
	associated with an ADMG $G$ if and only if all of the following
	hold:
	
	\noindent 1. $H=barren(H)$, where $barren(H)=\{v\in H|de(v)\cap H=\{v\}\}$.
	
	\noindent 2. H contained within a single district of $G_{an(H)}$. 
	
	\noindent 3. $T=tail(H)=(dis_{an(H)}(H)\setminus H)\cup pa(dis_{an(H)}(H)).$
\end{definition}
Evans and Richardson in \citep[Theorem 4.12]{er} prove that a probability distribution $P$ obeys the
global Markov property for an ADMG($G$) if and only if for every
$A\in\mathcal{A}(G)$,
\begin{equation}\label{eq_r}
p(X_A)=\prod_{H\in [A]_G}p(X_H|tail(H)),
\end{equation}  
where $[A]_G$ denotes a partition of A into sets $\{H_1,\dots,H_k\}\subseteq \mathcal{H}(G)$ (for a graph $G$, the set of heads is denoted by $\mathcal{H}(G)$), defined with $tail(H)$, as above. The following theorem provides an alternative factorization criterion for MVR chain graphs based on the proposed factorization criterion for acyclic directed mixed graphs in \citep{er}. 

\begin{theorem}\label{thm5}
	Let G be an MVR chain graph with chain components $(T|T\in\mathcal{T})$. If a probability distribution P obeys the
	global Markov property for G then 
	$p(x)=\prod_{T\in\mathcal{T}}p(x_T|x_{pa_G(T)}).$  
\end{theorem}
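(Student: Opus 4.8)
The plan is to instantiate the Evans--Richardson factorization \eqref{eq_r} at the full vertex set and then show that, for an MVR chain graph, the head/tail bookkeeping of Definition \ref{def5} degenerates so that the heads are exactly the chain components. First I would observe that $an_G(V)=V$, so $V\in\mathcal{A}(G)$ and \eqref{eq_r} applies verbatim, giving $p(x)=\prod_{H\in[V]_G}p(x_H\mid x_{tail(H)})$. Everything then reduces to identifying the partition $[V]_G$ and the associated tails.

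The key structural observation is that the defining property of an MVR chain graph simultaneously controls both notions driving Definition \ref{def5}. Since $G_V=G$, the districts of $G_V$ are precisely the connected components of $(G)_{\leftrightarrow}$, i.e.\ the chain components $\mathcal{T}$, because a bidirected edge joins two vertices exactly when they lie in a common chain component. Moreover, within a chain component every edge is bidirected, so $G$ has no directed edge with both endpoints in a single $T$; since directed edges only point from higher to lower components, no vertex of $T$ can be a proper ancestor of another vertex of $T$, and hence $T=barren(T)$. I would then verify that each $T$ is a head in the sense of Definition \ref{def5}: condition~1 is $T=barren(T)$, just established, and condition~2 holds because $T$ is bidirected-connected and no bidirected edge leaves $T$, so $T$ is a single district of $G_{an(T)}$. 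Thus $\{T:T\in\mathcal{T}\}\subseteq\mathcal{H}(G)$ and this collection partitions $V$.

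Next I would argue that this is in fact the partition $[V]_G$ returned by the Evans--Richardson construction, and not merely some valid head-partition. The construction builds $[V]_G$ recursively through the districts of induced subgraphs, splitting a set only along its internal ancestor/district structure; because a chain component carries no internal directed structure we have $barren(T)=T$, so the recursion inside each district terminates immediately and returns $T$ as a single, undivided head. Granting this, the factors follow from condition~3 of Definition \ref{def5}: since no bidirected edge leaves $T$ we get $dis_{an(T)}(T)=T$, whence $dis_{an(T)}(T)\setminus T=\emptyset$ and $tail(T)=pa_G(dis_{an(T)}(T))=pa_G(T)$. Substituting into \eqref{eq_r} gives $p(x)=\prod_{T\in\mathcal{T}}p(x_T\mid x_{pa_G(T)})$, as claimed.

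The main obstacle is the middle step: showing that the Evans--Richardson head partition of $V$ never refines a chain component, i.e.\ that $[V]_G=\mathcal{T}$ exactly. The content there is purely graph-theoretic and rests on the single fact that MVR chain graphs have no directed edge inside a chain component, so that neither ``barren'' nor ``district'' can ever split a component; once this is pinned down, the tail computation and the final substitution are routine. As an independent check on the conditioning sets, one can confirm $p(x_T\mid x_{pre(T)})=p(x_T\mid x_{pa_G(T)})$ directly: each $T$ is connected (it is a single district), so (MR1) of Definition \ref{def1} applies with $A=T$, yielding $T\perp\!\!\!\perp[pre(T)\setminus pa_G(T)]\mid pa_G(T)$, and (MR1) is available because the global Markov property implies the MR Markov property by Theorem \ref{thm1}; a chain-rule expansion over a consistent ordering of $\mathcal{T}$ then recovers the same product.
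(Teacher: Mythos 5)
Your proposal follows essentially the same route as the paper's own proof: both instantiate the Evans--Richardson factorization \eqref{eq_r} at the full (ancestrally closed) vertex set and verify that each chain component $T$ satisfies the head/tail conditions of Definition \ref{def5} --- no directed edge lies inside a component, so $T=barren(T)$, $T$ is a single district of $G_{an(T)}$, and hence $tail(T)=pa_G(T)$. If anything you are more careful than the paper on the one delicate point: the paper asserts $\mathcal{H}(G)=\{T\,|\,T\in\mathcal{T}\}$, which is too strong as stated (any bidirected-connected subset of a component is also a head), whereas you correctly isolate the claim actually needed, namely $[V]_G=\mathcal{T}$, and justify it via the recursive district/barren construction of the partition.
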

\begin{proof}
	According to Theorem 4.12 in \citep{er}, since $G\in \mathcal{A}(G)$, it is enough to show that $\mathcal{H}(G)=\{T|T\in\mathcal{T}\}$ and $tail(T)=pa_G(T)$, where $T\in\mathcal{T}$. In other words, it is enough to show that for every $T$ in $\mathcal{T}$, $(T, pa_G(T))$ satisfies the three conditions in Definition \ref{def5}. 
	
	\noindent 1. Let $x, y \in T$ and $T\in \mathcal{T}$. Then $y$ is not a descendant of $x$. Also, we know that $x\in de(x)$, by definition. Therefore, $T=barren(T).$
	
	\noindent 2. Let $T\in\mathcal{T}$, then from the definitions of an MVR chain graph and induced bidirected graph, it is obvious that $T$ is a single connected component of the forest $(G_{an(T)})_\leftrightarrow$. So, $T$ contained within a single district of $(G_{an(T)})_\leftrightarrow$.
	
	\noindent 3. $T\subseteq an(T)$ by definition. So, $\forall x\in T: dis_{an(T)}(x)=\{y|y\leftrightarrow\dots\leftrightarrow x \textrm{ in }an(T), \textrm{or }x=y\}=T$. Therefore, $dis_{an(T)}(T)=T$ and $dis_{an(T)}(T)\setminus T=\emptyset$. In other words, $tail(T)=pa_G(T)$.
\end{proof}
\begin{example}
	Consider the MVR chain graph G in Example \ref{Fig:11}. 
	Since $[G]_G=\{\{1,2,3,4\}\{5,6\}\{7\}\}$ so, $tail(\{1,2,3,4\})=\{5\}, tail(\{5,6\})=\{7\},$ and $tail(\{7\})=\emptyset$. Therefore, based on Theorem \ref{thm5} we have: $p=p_{1234|5}  p_{56|7}  p_{7}$. However, the corresponding factorization of G based on the formula in \citep{d, ml2} is: $p=p_{1234|56}  p_{56|7}  p_{7}$. 
\end{example}

The advantage of the new factorization is that it requires only graphical parents, rather than parent components in each factor, resulting in smaller variable sets for each factor, and therefore speeding up belief propagation.  Moreover, the new factorization is the same as the outer factorization of LWF and AMP CGs, as described in \citep{l,lr,cdls,amp}.

\section{Intervention in MVR Chain Graphs}
In the absence of a theory of intervention for chain graphs, a researcher would be unable to answer questions concerning the consequences 
of intervening in a system with the structure of a chain graph \citep{r1}. Fortunately, an intuitive account of the causal interpretation of MVR chain graphs is as follows. 
We interpret the edge $A\to B$ as $A$ being a cause of $B$. We interpret the edge $A\leftrightarrow B$ as $A$ and $B$
having an unobserved common cause $\lambda_{AB}$, i.e. a confounder. 

Given the above causal interpretation of an MVR CG $G$, intervening on $X\subseteq V$ so that $X$ is
no longer under the influence of its usual causes amounts to replacing the right-hand side of the
equations for the random variables in $X$ with expressions that do not involve their usual causes and normalizing. 
Graphically, it amounts to modifying $G$ as follows. Delete from $G$ all the edges $A\to B$ and $A\leftrightarrow B$ with $B\in X$ \citep{pena}.
\section*{Conclusion and Summary}
Based on the interpretation of the type of edges in a chain graph, there are different conditional independence structures among random variables in the corresponding probabilistic model. Other than pairwise Markov properties, we showed that for MVR chain graphs all Markov properties in the literature are equivalent for semi-graphoids. We proposed an alternative local Markov property for MVR chain graphs, and we proved that it is equivalent to other Markov properties for compositional semi-graphoids. Also, we obtained an alternative formula for factorization of an MVR chain graph. Table \ref{t3} summarizes some of the most important attributes of different types of common interpretations of chain graphs.



\begin{table}[bt]
	\centering
    \begin{threeparttable}
    \footnotesize
	\begin{tabular}{| p{25mm} | p{23mm} | p{25mm} | p{35mm} | p{23mm} |} 
		\hline
		\centering{Type of chain graph} & Does it represent   independence model of DAGs under \newline marginalization? & \centering{Global Markov property} & \centering{Factorization of $p(x)$} & \centering{Model selection (structural learning) algorithm(s) [constraint based method]} \tabularnewline
        \hline
		MVR CGs:  Cox \& Wermuth \citep{cw1,cw2,wc},  Pe{\~n}a \& Sonntag \citep{p2, s}, Sadeghi \& Lauritzen \citep{sl}, Drton (type IV) \citep{d}, Marchetti \& Lupparelli \citep{ml1,ml2} & Yes (claimed in \citep{cw2,ws,sl, s}, proved in Corollary \ref{cor1}) & 
		$(1) \quad X\!\perp\!\!\!\perp Y|Z$ if $X$ is separated from $Y$ by $Z$ in $(G_{ant(X\cup Y\cup Z)})^a$ or $(G_{an(X\cup Y\cup Z)})^a$ \citep{r2,rs}.
		\vskip 0.1in
		(2) $X\!\perp\!\!\!\perp Y|Z$ if $X$ is separated from $Y$ by $Z$ in $(G_{Antec(X\cup Y\cup Z)})^a.$
		\vskip 0.1in
		(1) and (2) are equivalent for compositional graphoids (see supplementary material).
		& (1) {Theorem\quad\ref{thm5},} $$\prod_{T\in\mathcal{T}}p(x_T|x_{pa(T)})$$  
		
		\vskip 0.1in

		(2) $$\prod_{T\in\mathcal{T}}p(x_T|x_{pa_D(T)})$$
		where $pa_D(T)$ is the union of all the chain components that are parents of $T$ in the directed graph $D$ {\citep{d,ml2}}.
		& PC like algorithm for MVR CGs in \citep{s,sp}, Decomposition-based algorithm for MVR CGs in \citep{jv3}. \\
		\hline
		
		LWF CGs \citep{f,lw}, Drton (type I) \citep{d} & \centering{No} & $X\!\perp\!\!\!\perp Y|Z$ if $X$ is separated from $Y$ by $Z$ in $(G_{An(X\cup Y\cup Z)})^m$ \citep{l}. & \citep{cdls,lr} $$\prod_{\tau\in\mathcal{T}}p(x_{\tau}|x_{pa(\tau)}),$$ where $p(x_{\tau}|x_{pa(\tau)})= Z^{-1}(x_{pa(\tau)})\prod_{c\in C}\phi_c(x_c),$ where $C$ are the complete sets in the moral graph $(\tau\cup pa(\tau))^m.$  & IC like algorithm in \citep{srs}, LCD algorithm in \citep{mxg}, CKES algorithm in \citep{psn,s} \\
		\hline
		
		AMP CGs \citep{amp}, Drton (type II) \citep{d} & \centering{No} & $X\!\perp\!\!\!\perp Y|Z$ if $X$ is separated from $Y$ by $Z$ in the undirected 
		graph $Aug[CG;X,Y,Z]$ \citep{r1}. & $\prod_{\tau\in\mathcal{T}}p(x_{\tau}|x_{pa(\tau)}),$ where no further factorization similar to LWF model appears to hold in general \citep{amp}. For the positive distribution $p$ see \citep{p3}. & PC like algorithm in \citep{p1} \\  
		\hline
	\end{tabular}\caption{Properties of chain graphs under different interpretations}\label{t3}
    \end{threeparttable}
\end{table}

\section*{acknowledgements}
This work has been partially supported by Office of Naval Research grant ONR N00014-17-1-2842. This research is based upon work supported in part by the Office of the Director of National
Intelligence (ODNI), Intelligence Advanced Research Projects Activity (IARPA), award/contract
number 2017-16112300009. The views and conclusions contained therein are those of the authors
and should not be interpreted as necessarily representing the official policies, either expressed or implied,
of ODNI, IARPA, or the U.S. Government. The U.S. Government is authorized to reproduce
and distribute reprints for governmental purposes, notwithstanding annotation therein.

An early version of this work was presented at the workshop of the Ninth International Conference on Probabilistic Graphical Models, Prague, September 11-14, 2018. Comments by reviewers and workshop participants are gratefully acknowledged.

\section*{Appendix A. Proof of Theorem \ref{thm1}}
\begin{proof}
	\noindent \textbf{(i)$\Rightarrow$(ii):} This has already been proved in \citep[Theorem 1]{r2}.
	
	\noindent \textbf{(ii)$\Rightarrow$(iii):}  Assume that the independence model $\Im$ over the node
	set of MVR CG($G$) satisfies the global Markov property w.r.t. $G$ in Definition \ref{RS2}. We have the following three cases:
	
	\noindent Case 1: Let $X=\tau\in \mathcal{T},    Y=nd_D(\tau)\setminus pa_D(\tau), \textrm{ and } Z=pa_D(\tau)$. So, $an(X\cup Y\cup Z)=\tau \cup nd_D(\tau)$ is an ancestor set, and $pa_D(\tau)$ separates $\tau$ from $nd_D(\tau)\setminus pa_D(\tau)$ in $(G_{\tau \cup nd_D(\tau)})^a$; this shows that the global Markov property in Definition \ref{RS2} implies (IV0) in Definition \ref{BR}.
	
	\noindent Case 2: Assume that $X=\sigma \subseteq \tau\in \mathcal{T}, Y=pa_D(\tau)\setminus pa_G(\sigma), \textrm{ and } Z=pa_G(\sigma)$. Consider that $W=an(X\cup Y\cup Z)=an(\sigma\cup pa_D(\tau))$. We know that there is no directed edge from $pa_D(\tau)\setminus pa_G(\sigma)$ to elements of $\sigma$, and also there is no collider path between nodes of  $Y$ and  $\sigma$ in $W$. So, every connecting path that connects $pa_D(\tau)\setminus pa_G(\sigma)$ to $\sigma$ in $(G_W)^a$ has intersection with $pa_G(\sigma)$, which means $pa_G(\sigma)$ separates $pa_D(\tau)\setminus pa_G(\sigma)$ from $\sigma$ in $(G_W)^a$; this shows that the global Markov property in Definition \ref{RS2} implies (IV1) in Definition \ref{BR}.
	
	\noindent Case 3: Assume that $X=\sigma \subsetneq \tau\in \mathcal{T}$ is a connected subset of $\tau$. Also, assume that $Y=\tau\setminus Nb_G(\sigma), \textrm{ and } Z=pa_D(\tau)$. Obviously, $\sigma$ and $\tau\setminus Nb_G(\sigma)$ are two subsets of $\tau$ such that there is no connection between their elements.  Consider that $A$ is the ancestor set containing $\sigma$, $\tau\setminus Nb_G(\sigma)$, and $pa_D(\tau)$. Clearly, $pa_D(\tau)\subseteq A$. Since $\sigma$ and $\tau\setminus Nb_G(\sigma)$ are disconnected in $\tau$, so any connecting path between them in $A$ (if it exists) must pass through $pa_D(\tau)$ in $(G_A)^a$; this shows that the global Markov property in Definition \ref{RS2} implies (IV2) in Definition \ref{BR}.
	
	\noindent \textbf{(iii)$\Rightarrow$(iv):} Assume that the independence model $\Im$ over the node
	set of MVR CG($G$) satisfies the block recursive Markov property w.r.t. $G$ in Definition \ref{BR}. We show that $\Im$ satisfies the MR Markov property w.r.t. $G$ in Definition \ref{def1} by considering the following two cases:
	
	\noindent Case 1 (IV0 and IV1 $\Rightarrow$ MR1): Assume that $A$ is a connected subset of $\tau$. From (IV1) we have:
	\begin{equation}\label{eq3}
	A\perp\!\!\!\perp (pa_D(\tau)\setminus pa_G(A))|pa_G(A)
	\end{equation}
	Also, from (IV0) we have $\tau \perp\!\!\!\perp (nd_D(\tau)\setminus pa_D(\tau))|pa_D(\tau)$, the decomposition property implies that 
	\begin{equation}\label{eq4}
	A \perp\!\!\!\perp (nd_D(\tau)\setminus pa_D(\tau))|pa_D(\tau)
	\end{equation}
	Using the contraction property for (\ref{eq3}) and (\ref{eq4}) gives: $A \perp\!\!\!\perp [(nd_D(\tau)\setminus pa_G(\tau))\cup (pa_D(\tau\setminus pa_G(A)))]|pa_G(\tau).$ Using the decomposition property for this independence relationship gives (MR1): $A \perp\!\!\!\perp (pre(\tau)\setminus pa_G(A))|pa_G(A),$ because $(pre(\tau)\setminus pa_G(A))\subseteq [(nd_D(\tau)\setminus pa_G(\tau))\cup (pa_D(\tau\setminus pa_G(A)))]$.
	
	\noindent Case 2 (IV0 and IV2 $\Rightarrow$ MR2): Consider that $A$ is a disconnected subset of $\tau$ that contains $r$ connected components $A_1,\dots ,A_r$ i.e., $A=A_1\cup\dots\cup A_r$. From (IV2) we have: $A_1\perp\!\!\!\perp \tau\setminus Nb_G(A_1)|pa_D(\tau)$. Using the decomposition property gives: 
	\begin{equation}\label{eq5}
	A_1\perp\!\!\!\perp A_2|pa_D(\tau)
	\end{equation}
	Also, using decomposition for (IV0) gives: $(A_1\cup A_2)\perp\!\!\!\perp (pre(\tau)\setminus pa_D(\tau))|pa_D(\tau)$. Applying the weak union property for this independence relation gives: $ A_1\perp\!\!\!\perp (pre(\tau)\setminus pa_D(\tau))|[A_2\cup pa_D(\tau)]$. Using the contraction property for this and (\ref{eq5}) gives: $ A_1\perp\!\!\!\perp [A_2\cup (pre(\tau)\setminus pa_D(\tau))]|pa_D(\tau)$. Using the weak union property leads to $A_1\perp\!\!\!\perp A_2|[(pa_D(\tau)\cup(pre(\tau)\setminus pa_D(\tau)))=pre(\tau)]$. Similarly, we can prove that for every $1\leq i\neq j\leq r$: $A_i\perp\!\!\!\perp A_j|pre(\tau)$.
	
	\noindent \textbf{(iv)$\Rightarrow$(v):} Assume that the independence model $\Im$ over the node
	set of MVR CG($G$) satisfies the MR Markov property w.r.t. $G$ in Definition \ref{def1}, and $\prec$ is an ordering that is consistent with $G$. Let $x\in A\subseteq pre_{G,\prec}(x)$, We show that $\Im$ satisfies the ordered local Markov property w.r.t. $G$ in Definition \ref{lo} by considering the following two cases:
	
	\noindent Case 1: There is a chain component $T$ such that $x\in T$. Consider that $A\cap T$ is a connected subset of $T$. From (MR1) we have: $dis_{G_A}(x)\perp\!\!\!\perp [pre(T)\setminus pa_G(dis_{G_A}(x))]|pa_G(dis_{G_A}(x))$. Using the weak union property gives: $\{x\} \perp\!\!\!\perp [pre(T)\setminus pa_G(dis_{G_A}(x))]|[pa_G(dis_{G_A}(x))\cup(dis_{G_A}(x)\setminus\{x\})]$. Since $[A\setminus (mb(x,A)\cup \{x\})] \subseteq [pre(T)\setminus pa_G(dis_{G_A}(x))]$, using the decomposition property leads to: $\{x\} \perp\!\!\!\perp [A\setminus (mb(x,A)\cup \{x\})]|mb(x,A)$. 
	
	\noindent Case 2: There is a chain component $T$ such that $x\in T$, and $A\cap T$ is a disconnected subset of $T$ with connected components $A_1,\dots,A_k$ i.e., $A\cap T = A_1\cup \dots\cup A_k$. It is clear that there is a $1\le d\le k$ such that $A_d=dis_{G_A}(x)$. We have the following two sub-cases:
	
	\noindent Sub-case I): $\sigma := T\setminus Nb_G(A_d)$ is a connected subset of $T$.
	\begin{equation}\label{eq0}
	\left\{
	\begin{array}{l}  
	\textrm{From (MR2): } A_d\perp\!\!\!\perp\sigma|pre(T)  \\
	\textrm{From (MR1): } A_d\perp\!\!\!\perp (pre(T)\setminus pa_G(A_d))|pa_G(A_d) 
	\end{array} \right.
	\end{equation}
	Using the contraction property for (\ref{eq0}) gives: $A_d\perp\!\!\!\perp [\sigma\cup(pre(T)\setminus pa_G(A_d))]|pa_G(A_d)$. Using the weak union property gives: $\{x\} \perp\!\!\!\perp [pre(T)\setminus pa_G(dis_{G_A}(x))]|[pa_G(dis_{G_A}(x))\cup(dis_{G_A}(x)\setminus\{x\})]$. Since $[A\setminus (mb(x,A)\cup \{x\})] \subseteq [pre(T)\setminus pa_G(dis_{G_A}(x))]$, using the decomposition property leads to: $\{x\} \perp\!\!\!\perp [A\setminus (mb(x,A)\cup \{x\})]|mb(x,A)$.
	
	\noindent Sub-case II): $T\setminus Nb_G(A_d)$ is a disconnected subset of $T$ with connected component $\sigma_1, \sigma_2$ i.e., $T\setminus Nb_G(A_d)=\sigma_1\cup \sigma_2$. From (MR1) we have: $\sigma_1\perp\!\!\!\perp (T\setminus Nb_G(\sigma_1))|pre(T)$. Since $(A_d\cup\sigma_2)\subseteq (T\setminus Nb_G(\sigma_1))$, using the decomposition and weak union property give: $\sigma_1\perp\!\!\!\perp  A_d|(pre(T)\cup\sigma_2)$. Using the symmetry property implies that $A_d\perp\!\!\!\perp \sigma_1|(pre(T)\cup\sigma_2)$.
	\begin{equation}\label{eq00}
	\left\{
	\begin{array}{l}  
	A_d\perp\!\!\!\perp \sigma_1|(pre(T)\cup\sigma_2)  \\
	\textrm{From (MR2): } A_d\perp\!\!\!\perp \sigma_2|pre(T) 
	\end{array} \right.
	\end{equation}
	Using the contraction property for (\ref{eq00}) gives: $A_d\perp\!\!\!\perp (\sigma_1\cup\sigma_2)|pre(T)$.
	\begin{equation}\label{eq000}
	\left\{
	\begin{array}{l}  
	A_d\perp\!\!\!\perp (\sigma_1\cup\sigma_2)|pre(T)  \\
	\textrm{From (MR1): } A_d\perp\!\!\!\perp (pre(T)\setminus pa_G(A_d)|pa_G(A_d) 
	\end{array} \right.
	\end{equation}
	Using the contraction property for (\ref{eq000}) gives: $A_d\perp\!\!\!\perp [(\sigma_1\cup\sigma_2)\cup(pre(T)\setminus pa_G(A_d))]|pa_G(A_d)$. Using the decomposition property gives: $\{x\}\perp\!\!\!\perp [(\sigma_1\cup\sigma_2)\cup(pre(T)\setminus pa_G(A_d))]|mb(x,A)$. Since $[A\setminus (mb(x,A)\cup \{x\})] \subseteq [(\sigma_1\cup\sigma_2)\cup(pre(T)\setminus pa_G(A_d))]$, using the decomposition property leads to: $\{x\} \perp\!\!\!\perp [A\setminus (mb(x,A)\cup \{x\})]|mb(x,A)$. 
	
	\noindent \textbf{(v)$\Rightarrow$(i):}	This has already been proved in \citep[Theorem 2]{r2}.	
\end{proof}



\bibliography{reference}

\begin{thebibliography}{41}
\providecommand{\natexlab}[1]{#1}
\providecommand{\url}[1]{\texttt{#1}}
\expandafter\ifx\csname urlstyle\endcsname\relax
  \providecommand{\doi}[1]{doi: #1}\else
  \providecommand{\doi}{doi: \begingroup \urlstyle{rm}\Url}\fi

\bibitem[Andersson et~al.(1996)Andersson, Madigan, and Perlman]{amp}
S.~A. Andersson, D.~Madigan, and M.~D. Perlman.
\newblock Alternative markov properties for chain graphs.
\newblock \emph{Uncertainty in artificial intelligence}, pages 40--48, 1996.

\bibitem[Banerjee and Richardson(2003)]{br}
M.~Banerjee and T.~Richardson.
\newblock On a dualization of graphical gaussian models: A correction note.
\newblock \emph{Scandinavian Journal of Statistics}, 30\penalty0 (4):\penalty0
  817--820, 2003.

\bibitem[Colombo et~al.(2012)Colombo, Maathuis, Kalisch, and Richardson]{cmkr}
D.~Colombo, M.~H. Maathuis, M.~Kalisch, and T.~S. Richardson.
\newblock Learning high-dimensional directed acyclic graphs with latent and
  selection variables.
\newblock \emph{The Annals of Statistics}, 40\penalty0 (1):\penalty0 294--321,
  2012.

\bibitem[Cowell et~al.(1999)Cowell, Dawid, Lauritzen, and Spiegelhalter]{cdls}
R.~Cowell, A.~P. Dawid, S.~Lauritzen, and D.~J. Spiegelhalter.
\newblock \emph{Probabilistic networks and expert systems. Statistics for
  Engineering and Information Science}.
\newblock Springer-Verlag, 1999.

\bibitem[Cox and Wermuth(1993)]{cw1}
D.~R. Cox and N.~Wermuth.
\newblock Linear dependencies represented by chain graphs.
\newblock \emph{Statistical Science}, 8\penalty0 (3):\penalty0 204--218, 1993.

\bibitem[Cox and Wermuth(1996)]{cw2}
D.~R. Cox and N.~Wermuth.
\newblock \emph{Multivariate Dependencies-Models, Analysis and Interpretation}.
\newblock Chapman and Hall, 1996.

\bibitem[Drton(2009)]{d}
M.~Drton.
\newblock Discrete chain graph models.
\newblock \emph{Bernoulli}, 15\penalty0 (3):\penalty0 736--753, 2009.

\bibitem[Drton and Maathuis(2017)]{dm}
M.~Drton and M.~H. Maathuis.
\newblock Structure learning in graphical modeling.
\newblock \emph{Annual Review of Statistics and Its Application}, 15\penalty0
  (4):\penalty0 365--393, 2017.

\bibitem[Evans(2016)]{e}
R.~Evans.
\newblock Graphs for margins of bayesian networks.
\newblock \emph{Scandinavian Journal of Statistics}, 43\penalty0 (3):\penalty0
  625--648, 2016.

\bibitem[Evans and Richardson(2014)]{er}
R.~Evans and T.~S. Richardson.
\newblock Markovian acyclic directed mixed graphs for discrete data.
\newblock \emph{The Annals of Statistics}, 42\penalty0 (4):\penalty0
  1452--1482, 2014.

\bibitem[Frydenberg(1990)]{f}
M.~Frydenberg.
\newblock The chain graph markov property.
\newblock \emph{Scandinavian Journal of Statistics}, 17\penalty0 (4):\penalty0
  333--353, 1990.

\bibitem[Javidian and Valtorta(2019)]{jv3}
M.~A. Javidian and M.~Valtorta.
\newblock Structural learning of multivariate regression chain graphs via
  decomposition.
\newblock \emph{\url{https://arxiv.org/abs/1806.00882}}, 2019.

\bibitem[Kauermann(1996)]{k}
G.~Kauermann.
\newblock On a dualization of graphical gaussian models.
\newblock \emph{Scandinavian Journal of Statistics}, 23\penalty0 (1):\penalty0
  105--116, 1996.

\bibitem[Lauritzen(1996)]{l}
S.~Lauritzen.
\newblock \emph{Graphical Models}.
\newblock Oxford Science Publications, 1996.

\bibitem[Lauritzen and Richardson(2002)]{lr}
S.~Lauritzen and T.~Richardson.
\newblock Chain graph models and their causal interpretations.
\newblock \emph{Journal of the Royal Statistical Society. Series B, Statistical
  Methodology}, 64\penalty0 (3):\penalty0 321--348, 2002.

\bibitem[Lauritzen and Wermuth(1989)]{lw}
S.~Lauritzen and N.~Wermuth.
\newblock Graphical models for associations between variables, some of which
  are qualitative and some quantitative.
\newblock \emph{The Annals of Statistics}, 17\penalty0 (1):\penalty0 31--57,
  1989.

\bibitem[Lauritzen et~al.(1990)Lauritzen, Dawid, Larsen, and Leimer]{ldll}
S.~Lauritzen, A.~P. Dawid, B.~N. Larsen, and H.-G. Leimer.
\newblock Independence properties of directed markov fields.
\newblock \emph{Networks}, 20\penalty0 (5):\penalty0 491--505, 1990.

\bibitem[Ma et~al.(2008)Ma, Xie, and Geng]{mxg}
Z.~Ma, X.~Xie, and Z.~Geng.
\newblock Structural learning of chain graphs via decomposition.
\newblock \emph{Journal of Machine Learning Research}, 9:\penalty0 2847--2880,
  2008.

\bibitem[Marchetti and Lupparelli(2008)]{ml1}
G.~Marchetti and M.~Lupparelli.
\newblock Parameterization and fitting of a class of discrete graphical models.
\newblock \emph{COMPSTAT: Proceedings in Computational Statistics. P. Brito.
  Heidelberg, Physica-Verlag HD}, pages 117--128, 2008.

\bibitem[Marchetti and Lupparelli(2011)]{ml2}
G.~Marchetti and M.~Lupparelli.
\newblock Chain graph models of multivariate regression type for categorical
  data.
\newblock \emph{Bernoulli}, 17\penalty0 (3):\penalty0 827--844, 2011.

\bibitem[Pearl(2009)]{pj}
J.~Pearl.
\newblock \emph{Causality. Models, reasoning, and inference}.
\newblock Cambridge University Press, 2009.

\bibitem[Pearl and Paz(1987)]{pp}
J.~Pearl and A.~Paz.
\newblock Graphoids: a graph based logic for reasoning about relevancy
  relations.
\newblock \emph{Advances in Artificial Intelligence II Boulay, BD, Hogg, D \&
  Steel, L (eds), North Holland, Amsterdam}, pages 357--363, 1987.

\bibitem[Pe{\~n}a(2014)]{p1}
J.~M. Pe{\~n}a.
\newblock Learning marginal {AMP} chain graphs under faithfulness.
\newblock \emph{European Workshop on Probabilistic Graphical Models PGM:
  Probabilistic Graphical Models}, pages 382--395, 2014.

\bibitem[Pe{\~n}a(2015)]{p2}
J.~M. Pe{\~n}a.
\newblock Every {LWF} and {AMP} chain graph originates from a set of causal
  models.
\newblock \emph{Symbolic and quantitative approaches to reasoning with
  uncertainty, Lecture Notes in Comput. Sci., 9161, Lecture Notes in Artificial
  Intelligence, Springer, Cham}, pages 325--334, 2015.

\bibitem[Pe{\~n}a(2016)]{pena}
J.~M. Pe{\~n}a.
\newblock Learning acyclic directed mixed graphs from observations and
  interventions.
\newblock \emph{Proceedings of the Eighth International Conference on
  Probabilistic Graphical Models, PMLR}, 52:\penalty0 392--402, 2016.

\bibitem[Pe{\~n}a(2018)]{p3}
J.~M. Pe{\~n}a.
\newblock Reasoning with alternative acyclic directed mixed graphs.
\newblock \emph{Behaviormetrika}, pages 1--34, 2018.

\bibitem[Pe{\~n}a et~al.(2014)Pe{\~n}a, Sonntag, and Nielsen]{psn}
J.~M. Pe{\~n}a, D.~Sonntag, and J.~Nielsen.
\newblock An inclusion optimal algorithm for chain graph structure learning.
\newblock \emph{In Proceedings of the 17th International Conference on
  Artificial Intelligence and Statistics}, pages 778--786, 2014.

\bibitem[Richardson(1998)]{r1}
T.~S. Richardson.
\newblock Chain graphs and symmetric associations.
\newblock \emph{In: Jordan M.I. (eds) Learning in Graphical Models. NATO ASI
  Series (Series D: Behavioural and Social Sciences), vol 89}, pages 229--259,
  1998.

\bibitem[Richardson(2003)]{r2}
T.~S. Richardson.
\newblock Markov properties for acyclic directed mixed graphs.
\newblock \emph{Scandinavian Journal of Statistics}, 30\penalty0 (1):\penalty0
  145--157, 2003.

\bibitem[Richardson and Spirtes(2002)]{rs}
T.~S. Richardson and P.~Spirtes.
\newblock Ancestral graph markov models.
\newblock \emph{The Annals of Statistics}, 30\penalty0 (4):\penalty0 962--1030,
  2002.

\bibitem[Roverato(2017)]{rov}
A.~Roverato.
\newblock \emph{Graphical Models for Categorical Data}.
\newblock Cambridge University Press, 2017.

\bibitem[Sadeghi and Lauritzen(2014)]{sl}
K.~Sadeghi and S.~Lauritzen.
\newblock Markov properties for mixed graphs.
\newblock \emph{Bernoulli}, 20\penalty0 (2):\penalty0 676--696, 2014.

\bibitem[Sadeghi and Wermuth(2016)]{sw}
K.~Sadeghi and N.~Wermuth.
\newblock Pairwise markov properties for regression graphs.
\newblock \emph{Stat}, 5:\penalty0 286--294, 2016.

\bibitem[Sonntag(2014)]{s}
D.~Sonntag.
\newblock \emph{A Study of Chain Graph Interpretations (Licentiate
  dissertation)[\url{https://doi.org/10.3384/lic.diva-105024}]}.
\newblock Link{\"o}ping University, 2014.

\bibitem[Sonntag and Pe{\~n}a(2012)]{sp}
D.~Sonntag and J.~M. Pe{\~n}a.
\newblock Learning multivariate regression chain graphs under faithfulness.
\newblock \emph{Proceedings of the 6th European Workshop on Probabilistic
  Graphical Models}, pages 299--306, 2012.

\bibitem[Spirtes et~al.(2000)Spirtes, Glymour, and Scheines]{sgs}
P.~Spirtes, C.~Glymour, and R.~Scheines.
\newblock \emph{Causation, Prediction and Search, second ed.}
\newblock MIT Press, Cambridge, MA., 2000.

\bibitem[Studen{\'y}(1989)]{studeny1}
M.~Studen{\'y}.
\newblock Multiinformation and the problem of characterization of conditional
  independence relations.
\newblock \emph{Problems of Control and Information Theory}, 18:\penalty0
  3--16, 1989.

\bibitem[Studen{\'y}(1997)]{srs}
M.~Studen{\'y}.
\newblock A recovery algorithm for chain graphs.
\newblock \emph{International Journal of Approximate Reasoning}, 17:\penalty0
  265--293, 1997.

\bibitem[Studen{\'y}(2005)]{studeny}
M.~Studen{\'y}.
\newblock \emph{Probabilistic Conditional Independence Structures}.
\newblock Springer-Verlag London, 2005.

\bibitem[Wermuth and Cox(2004)]{wc}
N.~Wermuth and D.~R. Cox.
\newblock Joint response graphs and separation induced by triangular systems.
\newblock \emph{Journal of the Royal Statistical Society. Series B, Statistical
  Methodology}, 66\penalty0 (3):\penalty0 687--717, 2004.

\bibitem[Wermuth and Sadeghi(2012)]{ws}
N.~Wermuth and K.~Sadeghi.
\newblock Sequences of regressions and their independences.
\newblock \emph{Test}, 21:\penalty0 215--252, 2012.

\end{thebibliography}



\end{document}